\title{Infinitude of Primes Using Formal Languages}
\author{Aalok Thakkar}
\date{}
\newtheorem{theorem}{Theorem}
\newtheorem{proposition}{Proposition}
\newtheorem{definition}{Definition}
\newtheorem{remark}{Remark}
\newtheorem{example}{Example}
\begin{document}

\maketitle

\begin{abstract}
Formal languages are sets of strings of symbols described by a set of rules specific to them. In this note, we discuss a certain class of formal languages, called regular languages, and put forward some elementary results. The properties of these languages are then employed to prove that there are infinitely many prime numbers. 
\end{abstract}

\section{Introduction.} If you have used the command line shell of an operating system, or search facilities of text editors, undoubtedly you would have come across pattern matching by entering an expression to match with text. These expressions can be used to describe sets of words. The computer can then simulate a simple model of computation to decide the membership of a given string in these sets. This note presents a formal characterization of such expressions and the sets they represent. Properties of these sets are discussed and then used to prove the infinitude of primes.

\section{Words and Languages.} In order to formalize the notion of expressions, we first look at alphabets, words, and languages. An alphabet $\Sigma$ is a set of formal symbols. A priori, these symbols do not bear any relation to each other (such as order). A finite word of length $k$ is a concatenated sequence $a_1 \cdot a_2 \cdot \ldots \cdot a_k $ where $a_i \in \Sigma$. An empty word is denoted by the symbol $\epsilon$. 

\begin{remark}
In the following sections, we only deal with finite alphabet and finite-length words.
\end{remark}

The set of all words over an alphabet $\Sigma$ is denoted by $\Sigma^*$. One can think of $\Sigma^*$ as the free monoid generated by elements of $\Sigma$, with $\epsilon$ as the identity, and concatenation as the binary operation. Using formal rules, one can describe subsets of $\Sigma^*$ which are called languages. The nomenclature reflects the idea that letters of the alphabet make words and words make languages. Formal language theory deals with the study of these languages, their properties, their representations, and computations using them. The following are three operations central to formal languages.

\begin{itemize}
\item (Concatenation) $L_1 \cdot L_2 = \{ u \cdot v : u \in L_1 \text{ and } v \in L_2 \}$ denotes the set of strings obtained by concatenating a string in $L_1$ with a string in $L_2$. 

\item (Union) $L_1 + L_2 = L_1 \cup L_2 = \{ w : w \in L_1 \text{ or } w \in L_2 \}$ denotes the union of $L_1$ and $L_2$.

\item (Kleene star) $L^* = \cup_{k \geq 0} L^k $ where $L^k$ is the $k$ times repeated concatenation of $L$. By definition, $L^0 = \{\epsilon\}$. One can think of $L^*$ as the smallest set containing $\epsilon$ and $L$ that is closed under concatenation, or as the free monoid generated by the elements of $L$.
\end{itemize}

\section{Regular Expressions and Regular Languages.} In general, a language may be infinite, in which case it is necessary to look for a finite representation of it. Expressions offer one such way to represent a certain class of languages. Given an alphabet $\Sigma$, an expression is a finite-length string of characters that uses symbols from $\Sigma$ and operators to describe the language. For notation, if $e$ is an expression, then $\mathcal{L} (e)$ denotes the language represented by $e$. Of the many equivalent ways to formally describe regular expressions, we opt for the one preferred by most introductory references for formal language theory. \\

The following are the constant expressions.
\begin{itemize}
\item $\emptyset$ denotes the empty set.
\item $a_i$ denotes the singleton set containing $a_i$ (a character in the alphabet $\Sigma$).
\end{itemize}

One can define the three operations of concatenation, union, and Kleene star for the expressions analogously. Note that $\emptyset^* = \{\epsilon\}$ by the definition of Kleene star.

\begin{definition}[Regular Expression]
The set of \emph{regular expressions} is the smallest set closed under concatenation, union, and Kleene star that contains the constant expressions. An expression is said to be \emph{regular} if it belongs to the set of regular expressions. 
\end{definition}

The expressions used for pattern matching in command line shells are exactly these regular expressions (sometimes with additional operators such as those to match with the beginning or the end of a line). The priority of operation is first given to Kleene star, followed by union, then by concatenation. Therefore, $a + b \cdot c^*$ means $a + (b \cdot (c^*))$. The following examples demonstrate the use of expressions to represent languages.

\begin{example}
Let $\Sigma = \{a, b, c\}$. The language $L \subset \Sigma^*$ consisting of words with at least one $a$ can be represented by the expression $ (a+b+c)^* \cdot a \cdot (a+b+c)^*$, and the language $L' \subset \Sigma^*$ consisting of words that do not contain any $a$ can be represented by $(b + c)^*$.
\end{example}

\begin{example}
Let $\Sigma = \{a, b, c\}$. Let $L''$ be the set of strings $w \in \Sigma^*$ such that for every $a$ occurring in $w$, there is at least one $b$ occurring to the right of it. For a word $w$ in $L''$, if $a$ does not occur in $w$, then it belongs to the language represented by $(b + c)^*$. Otherwise, $a$ occurs in $w$, and so $w$ can be written as $u \cdot b \cdot v$ where $u \in \Sigma^*$ contains at least one $a$, and $v \in \Sigma^*$ does not contain an $a$. Hence, $L''$ can be represented by the expression $(b + c)^* + (a+b+c)^* \cdot a \cdot (a+b+c)^* \cdot b \cdot (b + c)^*$.
\end{example}

Expressions are used to represent languages as they are concise and simple to give as input to computers. In order to check if a given string $w$ matches the expression $e$, that is, $w \in \mathcal{L}(e)$, the computer needs to allocate some bounded memory based on the expression (independent of the length of the word), and then the membership can be checked in time that is linear in the length of the word. The model of computation simulated for this process is called a \textit{deterministic finite automaton} (DFA). Informally, a DFA is a nonempty set of states $Q$, of which exactly one is an initial state and zero or more are final states, together with a transition function $\delta: Q \times \Sigma \rightarrow Q$. DFAs also represent languages, and a word is said to belong to the language of a given DFA if successive application of the transition function on reading the letters of the word map the initial state to one of the final states. Continuing Example 1, the DFA representing $L$ needs only two states as shown in Figure 1.

\begin{figure}
    \centering
        \includegraphics[]{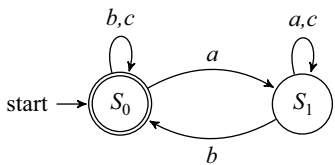}
        \caption{A DFA accepting the language described in Example 1. $S_0$ is the initial state (as indicated by the transition from "start") and the only final state (as indicated by the double-circle). Starting with $S_0$, we remain in $S_0$ on reading $b$ or $c$, and move to $S_1$ on reading an $a$. We remain in $S_1$ on reading $a$ or $c$, and move to $S_0$ on reading a $b$. The word is accepted if and only if we are in $S_0$ when the word ends.}
        \label{fig1}
\end{figure}

In 1956, Kleene proved the equivalence of the set of languages represented by regular expressions and the set of languages recognized by a DFA. We use this equivalence to define regular languages.

\begin{definition}[Regular Languages] $L \subset \Sigma^*$ is \emph{regular} if and only if it can be represented by a regular expression. 
\end{definition}

In order to show that the definition is meaningful, we give an example of a language that is not regular. As $\Sigma^*$ is a countably infinite set, its power set is uncountable. If one shows that there are only countably many regular expressions, the result is immediate. The reader is left to fill in the details while we take a constructive approach.

\begin{example}
Let $\Sigma = \{a\}$. Let $L$ be the set of strings $w \in \Sigma^*$ such that the length of $w$ is a power of 2. Let us show that $L$ is not regular. For the sake of contradiction, suppose there is a regular expression $e$ that represents $L$. As the set $L$ is infinite, the Kleene star operation must be used somewhere in the expression. The expression can be decomposed as $e = e' + (e_P \cdot (e_Q)^* \cdot e_R)$, where $e_P$, $e_Q$, and $e_R$ are non-empty and $e_Q$ is not equal to $\{\epsilon\}$. Let $w = (a^p) \cdot (a^q) \cdot (a^r)$ be such that $a^p \in \mathcal{L}(e_P)$, $a^q \in \mathcal{L}(e_Q^*)$, and $a^r \in \mathcal{L}(e_R)$. As $\mathcal{L}(e_Q^*)$ is infinite, we can chose $q > 0$. By definition of Kleene star, for all $n \in \mathbb{N}$, $a^{nq} \in \mathcal{L}(e_q^*)$. Hence, $ a^{p + nq + r} \in \mathcal{L}(e_P \cdot (e_Q)^* \cdot e_R) \subset \mathcal{L}(e)$. Therefore, for $q \neq 0$ and for some $p$ and $r$, $(p + r + nq)$ is a power of 2 for all $n \in \mathbb{N}$. We have produced an arithmetic progression in the set of powers of 2. Elementary number theory tells us that there is no arithmetic progression in the set of powers of $2$. Hence there is a contradiction. 
\end{example}

In order to prove irregularity using only the expressions, one needs to work combinatorially. This becomes involved when the alphabet is not singleton, and the language does not have such a simple structure. In such cases, distinguishing extensions provide an alternative. 

\begin{definition}[Distinguishing Extension] Given a language $L \subset \Sigma^*$, and a pair of strings $x$ and $y$ in $\Sigma^*$, a \emph{distinguishing extension} is a string $z \in \Sigma^*$ such that exactly one of the two strings $xz$ or $yz$ is a member of $L$. 
\end{definition}

The distinguishing extensions partition $\Sigma^*$ into equivalence classes that describe certain properties of the language. For regular languages, we have the following.

\begin{theorem}[Myhill--Nerode]
\label{nerode} Let $x \equiv_L y$ if there is no distinguishing extension for $x$ and $y$ with respect to $L$. Then
$L$ is regular if and only if $\equiv_L$ induces finitely many equivalence classes.
\end{theorem}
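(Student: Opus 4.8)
The plan is to prove both directions of the Myhill–Nerode theorem, exhibiting the two implications separately.

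\medskip

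\noindent\textbf{The forward direction ($L$ regular $\Rightarrow$ finitely many classes).} Here I would use the DFA characterization of regularity guaranteed by Kleene's theorem. Suppose $L = \mathcal{L}(M)$ for a DFA $M = (Q, \delta, q_0, F)$ with finitely many states. Extend the transition function to words in the usual way, writing $\hat\delta(q, w)$ for the state reached from $q$ after reading $w$. The key observation is that if $\hat\delta(q_0, x) = \hat\delta(q_0, y)$, then for every $z \in \Sigma^*$ we have $\hat\delta(q_0, xz) = \hat\delta(q_0, yz)$, so $xz \in L \iff yz \in L$; thus $x$ and $y$ admit no distinguishing extension and $x \equiv_L y$. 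Consequently the map sending the $\equiv_L$-class of $x$ to the state $\hat\delta(q_0, x)$ is well defined, so the number of equivalence classes is at most $|Q|$, which is finite.

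\medskip

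\noindent\textbf{The converse ($\equiv_L$ has finitely many classes $\Rightarrow$ $L$ regular).} This is the direction I expect to be the main obstacle, since it requires building a finite automaton (or, in the spirit of the paper, a regular expression) out of the abstract equivalence relation. I would first check that $\equiv_L$ is a genuine equivalence relation (reflexivity, symmetry, transitivity are routine from the definition of distinguishing extension) and, crucially, that it is a \emph{right congruence}: if $x \equiv_L y$ then $xa \equiv_L ya$ for every $a \in \Sigma$, because any distinguishing extension $z$ for $xa, ya$ would make $az$ a distinguishing extension for $x, y$. Then I would construct the canonical DFA $M_L$ whose states are the (finitely many) $\equiv_L$-classes $[x]$, with initial state $[\epsilon]$, transition $\delta([x], a) = [xa]$ (well defined by the right-congruence property), and final states those $[x]$ with $x \in L$ (well defined since $\epsilon$ itself is a distinguishing extension whenever one of $x, y$ is in $L$ and the other is not, so membership in $L$ is constant on classes). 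An easy induction on $|w|$ shows $\hat\delta([\epsilon], w) = [w]$, hence $w$ is accepted iff $w \in L$, so $L = \mathcal{L}(M_L)$. Finally I would invoke Kleene's theorem once more to convert this DFA into a regular expression, yielding that $L$ is regular in the sense of Definition 2.

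\medskip

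\noindent The subtle points to handle carefully are the well-definedness claims for the transition function and the set of final states of $M_L$ — each hinges on a small argument about distinguishing extensions — and the appeal to Kleene's equivalence, which lets us move freely between the DFA and regular-expression formulations of regularity. Everything else is bookkeeping.
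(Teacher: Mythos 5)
The paper itself gives no proof of Theorem~\ref{nerode}: it is stated as a classical result and attributed to Myhill and Nerode, so there is no in-paper argument to compare yours against. Your proposal is the standard textbook proof and is essentially correct within the paper's framework: the forward direction counts $\equiv_L$-classes against the states of a DFA supplied by Kleene's equivalence (which the paper explicitly invokes), and the converse builds the canonical automaton on the finitely many classes via the right-congruence property, checks well-definedness of transitions and final states, proves $\hat\delta([\epsilon],w)=[w]$ by induction, and converts back to a regular expression so as to meet Definition~2's notion of regularity. Those are exactly the right ingredients, and your identification of the well-definedness checks as the delicate points is accurate.

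One small repair is needed in the forward direction: the map you claim to be well defined goes the wrong way. From $\hat\delta(q_0,x)=\hat\delta(q_0,y)$ you correctly deduce $x\equiv_L y$, but this shows that the assignment sending a reachable \emph{state} $\hat\delta(q_0,x)$ to the class $[x]$ is well defined (and hits every class), not that the class of $x$ determines the state: in a non-minimal DFA two $\equiv_L$-equivalent words may land in different states, so ``the map sending the $\equiv_L$-class of $x$ to the state $\hat\delta(q_0,x)$'' need not be a function. The corrected statement still yields exactly the bound you want --- the number of $\equiv_L$-classes is at most the number of reachable states, hence at most $|Q|$ --- so the argument goes through after a one-sentence fix; everything else in your write-up is sound.
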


Theorem 1 is a strong and useful characterization of regular languages and provides a systematic way to deal with regularity testing. Other alternatives are the pumping lemma, Parikh's theorem, and tests based on the closure properties . The following examples show the application of Theorem \ref{nerode}.

\begin{example}
Let $\Sigma = \{a,b\}$ and let $|w|$ denote the length of a word $w$. Let $L$ be the set of all strings $w \in \Sigma^*$ with length of the form $5k + 3$, that is, $|w| \equiv 3 \mod 5$. Given $u, v \in \Sigma^*$ such that $m \equiv |u| \not\equiv |v| \mod 5$ and $0 \leq m < 4$, let $w = a^{8 - m}$. Notice that $|u\cdot w| = m + 8 - m \equiv 3 \mod 5$ while $|v \cdot w| \not\equiv m + 8 - m \equiv 3 \mod 5$. Hence $w$ distinguishes $u$ and $v$. Also, if $|u| \equiv |v| \mod 5$, then there is no distinguishing string and hence $u \equiv_L v$. This implies that, $\equiv_L$ induces exactly five equivalence classes, partitioning $\Sigma^*$ into the classes based on the length of a string modulo five, which is to say that, given a string $w$, it must be equivalent to $a$, $a^2$, $a^3$, $a^4$, or $a^5$ under $\equiv_L$. Therefore, by Theorem 1, $L$ is regular.
\end{example}

\begin{example}
Let $\Sigma = \{a\}$ and let $L$ be the set of all strings $w \in \Sigma^*$ with length equal to a Fibonacci number. Let $F(i)$ be the $i$th Fibonacci number. Let $A = \{ a^{F(3k)} : k \in \mathbb{N}; F(k) > 1\}$. For two distinct elements $x = a^{F(3i)}$ and $y = a^{F(3j)}$ of $A$ with $i > j$, $z = a^{F(3i-1)}$ is a distinguishing extension as $xz \in L$ and $yz \not\in L$, by the definition of Fibonacci numbers. Hence no two elements of $A$ can belong to the same equivalence class, and by Theorem 1, we have that $L$ is not regular.
\end{example}

Not only does Theorem 1 put forward a rigorous test of regularity, but it also brings out the algebraic character of formal language theory. The concept of distinguishing extensions was further developed by M. P. Sch\"{u}tzenberger in his seminal paper on star-free languages, and by Krohn and Rhodes in their work on an algebraic theory of machines. One can appreciate the parallels between algebra and computation---the spirit of formalism, the affection for abstraction, and the strive for elegance. Encompassing logic and proof theory, combinatorics and computation complexity, the domain of the intersection of the two subjects also provides the following proof of infinitude of primes.

\section{A Class of Regular Languages.} In order to prove the infinitude of primes, we look at a particular class of languages. Consider the alphabet $\Sigma = \{a,b \}$. Let $|w|_{\alpha}$ denote the number of occurrences of the character $\alpha$ in the string $w$, and set $\xi(w) = |w|_a - |w|_b$. For $n \in \mathbb{Z}^+$, let

$$L_n = \{ w \in \Sigma^* : \xi(w) \text{ is divisible by } n \}.$$

\begin{proposition}
For all $ n \in \mathbb{Z}^+$, the language $L_n$ is regular.
\end{proposition}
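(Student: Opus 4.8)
The plan is to apply the Myhill--Nerode theorem (Theorem~\ref{nerode}): I will show that $\equiv_{L_n}$ induces at most $n$ equivalence classes, so that $L_n$ is regular. The natural candidate invariant is $\xi(w) \bmod n$. So the first step is to prove that if $\xi(x) \equiv \xi(y) \pmod n$, then $x \equiv_{L_n} y$, i.e. no distinguishing extension exists. This follows from the observation that $\xi$ is additive under concatenation: $\xi(w z) = \xi(w) + \xi(z)$ for any words $w, z \in \Sigma^*$. Hence for any extension $z$, we have $\xi(xz) = \xi(x) + \xi(z) \equiv \xi(y) + \xi(z) = \xi(yz) \pmod n$, so $xz \in L_n$ if and only if $yz \in L_n$; no $z$ can distinguish them.

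Next I would check that this really gives only finitely many classes. Since $\xi(w) \bmod n$ takes at most $n$ values as $w$ ranges over $\Sigma^*$, the equivalence relation $\equiv_{L_n}$ has at most $n$ classes (in fact exactly $n$, since each residue is realized: $a^k$ has $\xi(a^k) = k$, so the words $\epsilon, a, a^2, \dots, a^{n-1}$ already hit all $n$ residues and are pairwise distinguishable by noting that for residues $i \not\equiv j$, the extension $z = a^{\,n - i}$ or similar sends one into $L_n$ and not the other --- though for the regularity conclusion we only need the upper bound). By Theorem~\ref{nerode}, finitely many equivalence classes implies $L_n$ is regular, which completes the proof.

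I do not expect any serious obstacle here; the argument is essentially the same as in Example~5 (length modulo five), with $\xi$ playing the role that length played there. The only point requiring a word of care is the additivity of $\xi$ under concatenation, which is immediate from the fact that $|w z|_\alpha = |w|_\alpha + |z|_\alpha$ for each character $\alpha$. As an alternative, one could instead exhibit an explicit regular expression for $L_n$ or describe a DFA on $n$ states tracking $\xi \bmod n$, invoking Kleene's theorem; but the Myhill--Nerode route is cleanest and matches the flavour of the preceding examples, so that is the approach I would present.
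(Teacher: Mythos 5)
Your proposal is correct and follows essentially the same route as the paper: a Myhill--Nerode argument showing that $\xi(w) \bmod n$ determines the $\equiv_{L_n}$-class via the additivity of $\xi$, so there are at most (in fact exactly) $n$ classes. You even spell out more carefully than the paper the key direction that congruent residues admit no distinguishing extension, which is the step that actually yields finiteness.
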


\begin{proof} Similar to Example 3, given $u, v \in \Sigma^*$, let $m \equiv \xi(u) \not \equiv \xi (v) \mod n$ where $0 \leq m< n$. The string $a^{n - m}$ distinguishes $u$ from $v$. Indeed, note that $\xi$ is additive, which is to say that $\xi(u \cdot v) = \xi(u) + \xi(v)$. Now, $\xi(ua^{n-m}) = \xi(u) + \xi(a^{n-m}) \equiv m + n - m \equiv 0 \mod n$, which implies that $ua^{n-m} \in L_n$. A similar calculation shows that $va^{n-m} \not\in L_n$. If $\xi(u) \equiv \xi (v) \mod n$, then there is no distinguishing extension. Therefore, $\equiv_{L_n}$ induces exactly $n$ equivalence classes, that is, any word $w \in \Sigma^*$ is equivalent to one of the elements of the set $\{a^i : 0 \leq i < n\}$. By Theorem 1, $L_n$ is regular.
\end{proof}

Let $\mathbb{P} \subset \mathbb{N}$ be the set of primes. Let

\begin{equation}
\label{L}
L = \bigcup\limits_{p \in \mathbb{P}}L_p.
\end{equation}

\begin{proposition}
$L = \{ w \in \Sigma^* : \xi(w) \neq \pm 1 \}$.
\end{proposition}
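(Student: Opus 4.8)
The plan is to prove the set equality by double inclusion, working with the quantity $\xi(w) = |w|_a - |w|_b$, which can be any integer (including zero). First I would establish the easy inclusion $L \subseteq \{w : \xi(w) \neq \pm 1\}$: if $w \in L$, then $w \in L_p$ for some prime $p$, so $p \mid \xi(w)$; since $\pm 1$ is not divisible by any prime, $\xi(w) \neq \pm 1$. For the reverse inclusion, suppose $\xi(w) \neq \pm 1$. I would split into cases: if $\xi(w) = 0$, then every prime $p$ divides $0$, so $w \in L_2 \subseteq L$; if $\xi(w)$ is a nonzero integer with $|\xi(w)| \geq 2$, then $|\xi(w)|$ has a prime divisor $p$ (by the fundamental theorem of arithmetic, or just the fact that every integer $\geq 2$ has a prime factor), and $p \mid \xi(w)$ gives $w \in L_p \subseteq L$. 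Combining both directions yields the claimed equality.

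The only subtlety worth flagging is the boundary case $\xi(w) = 0$, which must be handled separately since $0$ is divisible by every prime and therefore lies in $L$; a reader skimming the statement ``$\xi(w) \neq \pm 1$'' might momentarily forget that $0$ is included on the right-hand side and belongs to $L$. Beyond that, the argument is entirely elementary number theory — the content is really just the observation that the integers \emph{not} expressible as having a prime factor are exactly $-1$, $0$\footnote{With the convention that $0$ has every prime as a factor.}, and $1$, and of these only $\pm 1$ fail. I do not anticipate a genuine obstacle here; the proposition is a bookkeeping lemma that repackages $L$ into a form amenable to the upcoming argument (presumably: if there were only finitely many primes, $L$ would be a finite union of regular languages, hence regular, and one then derives a contradiction by exhibiting infinitely many Myhill--Nerode classes for $\{w : \xi(w) \neq \pm 1\}$).
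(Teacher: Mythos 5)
Your proof is correct and follows essentially the same route as the paper: both arguments reduce to the fact that an integer has a prime divisor if and only if it is not $\pm 1$, applied to $\xi(w)$ in each direction. Your explicit treatment of the case $\xi(w) = 0$ (every prime divides $0$, so $w \in L_2$) is a point the paper's wording only covers implicitly, but it does not change the substance of the argument.
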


\begin{proof}
By definition, for every integer other than $+1$ and $-1$, there exists a prime number dividing it, and there is no prime number dividing $+1$ or $-1$. For $w \in \Sigma^*$, if $\xi(w) \neq \pm 1$, then some $p \in \mathbb{P}$ divides $\xi(w)$ and if $\xi(w) = \pm 1$, then no prime in $\mathbb{P}$ divides $\xi(w)$. Therefore $w \in \cup_{p \in \mathbb{P}} L_p$ if and only if $\xi(w) \neq \pm 1$.
\end{proof}

\begin{proposition}
\label{proposition3}
$L$ is not regular.
\end{proposition}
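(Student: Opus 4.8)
The plan is to use the Myhill--Nerode theorem (Theorem~\ref{nerode}) in its contrapositive form: I will exhibit an infinite family of strings in $\Sigma^*$, no two of which are $\equiv_L$-equivalent, which forces $\equiv_L$ to have infinitely many equivalence classes and hence $L$ to be non-regular. By Proposition~2 we know the explicit description $L = \{w \in \Sigma^* : \xi(w) \neq \pm 1\}$, so membership in $L$ is governed entirely by the integer $\xi(w)$, and since $\xi$ is additive, $\xi(wz) = \xi(w) + \xi(z)$; this reduces the whole question to arithmetic on $\mathbb{Z}$.

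The natural candidate family is $\{a^k : k \geq 0\}$, for which $\xi(a^k) = k$. Given $i \neq j$ with $i, j \geq 0$, I want a string $z$ that distinguishes $a^i$ from $a^j$, i.e. with exactly one of $i + \xi(z)$, $j + \xi(z)$ equal to $\pm 1$. Choosing $z = b^{i+1}$ gives $\xi(z) = -(i+1)$, so $\xi(a^i z) = -1$, putting $a^i z \notin L$; meanwhile $\xi(a^j z) = j - i - 1$, which equals $-1$ only if $j = i$, so for $j \neq i$ we get $a^j z \in L$ unless $j - i - 1 = +1$, i.e. $j = i + 2$. To handle that residual case I will instead pick $z$ depending on the pair: if $j \neq i+2$ use $z = b^{i+1}$, and if $j = i+2$ use $z = b^{i+3}$ (so $\xi(a^i z) = -(i+2) \neq \pm 1$ for $i \geq 0$ while $\xi(a^j z) = \xi(a^{i+2} b^{i+3}) = -1 \notin L$). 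Either way $z$ is a distinguishing extension, so the infinitely many strings $a^0, a^1, a^2, \dots$ lie in pairwise distinct $\equiv_L$-classes.

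A cleaner alternative, which I will probably present instead, is to avoid case analysis by noting that $\xi$ takes every integer value on $\Sigma^*$, so it suffices to show that the map $k \mapsto (\text{$\equiv_L$-class of any $w$ with $\xi(w) = k$})$ is injective on a suitable infinite set. For $k \geq 2$, the string $b^{k-2}$ satisfies $\xi(a^k \cdot b^{k-2}) = 2 \in L$-membership, hmm — better: given $k \neq \ell$ both $\geq 2$, the extension $z$ with $\xi(z) = 1 - k$ sends $\xi(a^k z) = 1 \notin L$ and $\xi(a^\ell z) = \ell - k + 1$; this is $\pm 1$ only when $\ell = k$ or $\ell = k - 2$, so restricting to, say, even $k \geq 2$ (where $k$ and $k-2$ have the same parity but we can instead just intersperse) still needs care. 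Rather than belabor this, I will simply take the pair-dependent extension above: it is elementary and self-contained.

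The main obstacle is therefore not conceptual but bookkeeping: a single fixed extension $z$ cannot separate $a^i$ from $a^j$ when $j = i \pm 2$, because translating the non-regular set by a constant always leaves two ``bad'' preimages ($+1$ and $-1$). The fix is to allow $z$ to depend on the pair $(i,j)$ — which Definition~3 explicitly permits — and then a two-line verification using additivity of $\xi$ and Proposition~2 completes the argument. Once infinitely many classes are produced, Theorem~\ref{nerode} immediately yields that $L$ is not regular.
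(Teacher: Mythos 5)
Your proof is correct and takes essentially the same route as the paper: both exhibit infinitely many pairwise $\equiv_L$-inequivalent strings of the form $a^m$ by appending a power of $b$ chosen (pair-dependently) so that one of the two extended strings has $\xi = -1$, then invoke Theorem~\ref{nerode}; the paper merely avoids your $j = i+2$ case split by restricting to the sparser family $\{a^{3k}\}$ and using the single extension $b^{3j+1}$. (One harmless slip: $\xi(a^i b^{i+3}) = -3$, not $-(i+2)$, but either value is $\neq \pm 1$, so your verification still goes through.)
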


\begin{proof}
Consider the set $A = \{a^{3k} : k \in \mathbb{N} \}$. For $x = a^{3i}$ and $y = a^{3j}$ in $A$ with $i > j$, the string $z = b^{3j+1}$ distinguishes the two as $\xi(xz) = \xi(a^{3i}b^{3j+1}) = 3(i - j) - 1 \geq 2$ which implies  $xz \in L$, whereas $\xi(yz) = \xi(a^{3j}b^{3j + 1}) = -1$ so $yz \not\in L$. Therefore, no two elements of the infinite set $A$ belong to the same equivalence class. By Theorem 1, $L$ is not regular.
\end{proof}

\section{Infinitude of Primes.} By the definition of regular expressions, the union of two regular languages is regular, and hence regularity is preserved under finite union. For the sake of contradiction, suppose there are only finitely many primes; then we have $L$ as a finite union of regular languages $L_p$ making $L$ regular. This contradicts Proposition \ref{proposition3}. Hence we have shown that there must be infinitely many prime numbers. A motivated reader may now assume the infinitude of primes and prove that the language of words of prime length is not regular.

\vfill\eject


\begin{thebibliography}{10}

\bibitem{furstenberg} Furstenberg, H. (1955). On the infinitude of primes. \textit{Amer. Math. Monthly}. 62(5): 353. 

\bibitem{motwani}  Hopcroft, J., Motwani, R., Ullman, J. D. (2001). \textit{Introduction to Automata Theory, Languages, and Computation}, 2nd ed. Boston: Addison-Wesley.

\bibitem{kleene} Kleene, S. C. (1956). Representation of events in nerve nets and finite automata. In: Shannon, C., McCarthy, J., eds. \textit{Automata Studies}. Princeton, NJ: Princeton Univ. Press, pp. 3--42.

\bibitem{kozen} Kozen, D. C. (2006). \textit{Theory of Computation.} London: Springer-Verlag.

\bibitem{krohnrhodes} Krohn, K., Rhodes, J. (1965). Algebraic theory of machines. I. Prime decomposition theorem for finite semigroups and machines. \textit{Trans. Amer. Math. Soc.} 116: 450--464.

\bibitem{yourhill} Myhill, J. (1957). Finite automata and the representation of events. Technical Report WADC TR-57-624, Dayton, OH: Wright Patterson Air Force Base.

\bibitem{nerode} Nerode, A. (1958). Linear automaton transformations. \textit{Proc. Amer. Math. Soc.} 9: 541--544.

\bibitem{parikh} Parikh, R. (1966). On context-free languages \textit{Journal of the ACM}. 13: 570--581.

\bibitem{sakarovitch} Sakarovitch, J. (2009). \textit{Elements of Automata Theory}. (Thomas, R., trans.) Cambridge: Cambridge Univ. Press.

\bibitem{computation and automata} Salomaa, A. (1985). \textit{Computations and Automata}. Encyclopedia of Mathematics and Its Applications, Vol.\ 25. Cambridge: Cambridge Univ. Press.

\bibitem{schutz} Sch\"{u}tzenberger, M. P. (1955-1956). Une th\'{e}orie alg\'{e}brique du codage. \textit{S\'{e}minaire Dubreil. Alg\'{e}bre et Th\'{e}orie des Nombres} 9: 1--24.

\bibitem{sipser} Sipser, M. (2012). \textit{Introduction to the Theory of Computation}, 3rd ed. Boston: Cengage Learning.
\end{thebibliography}
\end{document}